\newtheorem{theorem}{Theorem}
\newtheorem{lemma}[theorem]{Lemma}
\theoremstyle{definition}
\theoremstyle{remark}
\begin{document}

\preprint{APS/123-QED}

\title{Thresholds for post-selected quantum error correction from statistical mechanics}

\author{Lucas~H.~English}
 \email{lucas.english@sydney.edu.au}
\affiliation{%
School of Physics, University of Sydney, Sydney, New South Wales 2006, Australia
}%
\author{Dominic~J.~Williamson}
\thanks{Current address: IBM Quantum, IBM Almaden Research Center, San Jose, CA 95120, USA}
\affiliation{%
School of Physics, University of Sydney, Sydney, New South Wales 2006, Australia
}%
\author{Stephen~D.~Bartlett}
\affiliation{%
School of Physics, University of Sydney, Sydney, New South Wales 2006, Australia
}%

\date{\today}

\begin{abstract}
We identify regimes where post-selection can be used scalably in quantum error correction (QEC) to improve performance.  We use statistical mechanical models to analytically quantify the performance and thresholds of post-selected QEC, with a focus on the surface code.  Based on the non-equilibrium magnetization of these models, we identify a simple heuristic technique for post-selection that determines whether to abort without requiring a decoder.  Along with performance gains, this heuristic allows us to derive analytic expressions for  post-selected conditional logical thresholds and abort thresholds of surface codes. We find that such post-selected QEC is characterised by four distinct thermodynamic phases, and detail the implications of this phase space for practical, scalable quantum computation.

\end{abstract}

\maketitle

\textit{Introduction.}---Quantum computers are expected to solve certain problems that are intractable with conventional methods \cite{Daley2022,doi:10.1126/science.aar3106}.  
However, quantum information is easily corrupted by interactions with the environment, and current qubit and gate error rates are too high to allow for large-scale quantum computations without some means to correct these errors \cite{RevModPhys.95.045005}. By using quantum error correcting codes and the techniques of fault-tolerant quantum computing, large-scale quantum computations with arbitrary accuracy are predicted to be possible \cite{Campbell2017, 2211.07629}. The overheads needed for quantum error correction, in terms of qubit redundancy and additional circuit complexity, can be very large, and there is a need for new approaches to reduce these overheads in practice \cite{2409.17595,Bravyi2024,RevModPhys.87.307}.

Post-selection has been proposed as a means to enhance the performance of quantum error correction by aborting, and potentially reinitializing, an experiment if the success of correction is sufficiently uncertain~\cite{cdi_proquest_journals_3054660994,cdi_proquest_journals_3052222419,ChenEdwardH2022CDfE,PhysRevA.110.012419,cdi_doaj_primary_oai_doaj_org_article_fe27d63449cc48129d59ac0f6bebc5ad,10.5555/2011763.2011764}. 
Knill first proposed such an enhancement in Ref.~\cite{KnillE2005Qcwr}, demonstrating improved threshold estimates for concatenated distance-2 codes and posing the question: \textit{are thresholds of post-selected computing  strictly higher than those of standard quantum computing?}
Recently, numerical and experimental results have demonstrated improved conditional logical error rates from post-selection \cite{cdi_proquest_journals_2899513284,HarperRobin2019FLGi,SundaresanNeereja2023Dmsq,PostlerLukas2022Dofu,BluvsteinDolev2024Lqpb,YeYangsen2023LMSP,GuptaRiddhiS2024Eams}. Despite these promising results, the theoretical limits of performance gains enabled by post-selected QEC are poorly understood.

Models from statistical mechanics provide a powerful set of tools for quantifying these limits.  For a QEC code, the value of the threshold with optimal decoding can be reformulated as the location of a phase transition in a statistical mechanical model~\cite{cdi_crossref_primary_10_1063_1_1499754}. This formulation was originally proposed for the surface code~\cite{cdi_crossref_primary_10_1063_1_1499754} and was later generalized to arbitrary stabilizer codes~\cite{cdi_crossref_primary_10_4171_AIHPD_105}. Recently, this approach has been extended to connect the stability of topologically-ordered phases under local perturbations with the thresholds of topological quantum error correcting codes~\cite{2406.15757}.

In this Letter, we examine how post-selection affects the mapping of Pauli stabilizer codes to disordered statistical mechanical models. Specifically, it limits the quenched average over all $n$-qubit Pauli errors to a subset defined by the post-selection rule. Although conceptually simple, its significance lies in the nontrivial definition of the accepted set. We demonstrate that optimal post-selection can be performed by setting a minimum cutoff value for the largest disordered partition function, which represents a logical coset probability under a maximum-likelihood decoder (MLD).
We propose a simple heuristic post-selection rule that does not require any decoding prior to aborting, and analytically compute bounds on its conditional logical thresholds. 
We apply a mean-field argument to analytically determine the abort threshold of the heuristic post-selection rule. 
We identify four distinct thermodynamic phases for post-selected quantum error correction, and identify regimes where post-selection can be advantageous in scalable fault-tolerant quantum computing. 
Finally, we apply the heuristic post-selection rule to experimental data taken from a recent QEC demonstration by Google~\cite{2408.13687}.

\textit{Statistical mechanical mapping.}---Pauli stabilizer codes can be mapped to classical spin Hamiltonians where each stabilizer check is represented by a spin $s_{k}=\pm 1$ and Pauli errors $E\in\mathcal{P}^{\otimes n}$ introduce quenched disorder~\cite{cdi_crossref_primary_10_1063_1_1499754,cdi_crossref_primary_10_4171_AIHPD_105}. In this mapping, the Hamiltonian is given by
\begin{equation}\label{eq:Hamiltonian}
    H_{E}(\vec{s})=-\sum_{i,\sigma\in\mathcal{P}_{i}}\overbrace{J_{i}(\sigma)}^\text{Strength}\overbrace{\llbracket\sigma,E\rrbracket}^\text{Disorder}\overbrace{\prod_{k:\llbracket\sigma,S_{k}\rrbracket=-1}s_{k}}^\text{Interactions},
    \end{equation}
where $J_{i}(\sigma)$ are noise-dependent interaction strengths, $\mathcal{P}_{i}$ is the local Pauli group on qubit~$i$ and the scalar commutator $\llbracket \sigma,E\rrbracket=\pm 1$ distinguishes ferromagnetic from antiferromagnetic couplings. Fixing a noise model $\mathbb{P}(E)$, a phase transition is identified when the quenched average free energy $[\langle F \rangle_{Z[E]}]_{E}$ becomes non-analytic. Along the Nishimori line \cite{cdi_crossref_primary_10_1143_PTP_66_1169}, where the conditions
\begin{equation}
    \beta J_{i}(\sigma)=\frac{1}{|\mathcal{P}|}\sum_{\tau\in\mathcal{P}_{i}}\text{log}\mathbb{P}_{i}(\tau)\llbracket\sigma,\tau^{-1}\rrbracket
\end{equation}
hold, the phase transition coincides with the code’s logical threshold under maximum-likelihood decoding \cite{cdi_crossref_primary_10_4171_AIHPD_105}. Moreover, a gauge symmetry of the Hamiltonian ensures that the disordered partition function $Z_{E}=\sum_{\vec{s}}e^{-\beta H_{E}(\vec{s})}$ encodes the error’s logical coset probability \cite{cdi_crossref_primary_10_4171_AIHPD_105}. For instance, the toric code under independent depolarizing noise maps to a disordered eight-vertex model \cite{cdi_doaj_primary_oai_doaj_org_article_be091cd1287f4f45881801af0e724334}, while under bit-flip noise it maps to a random-bond Ising model \cite{cdi_crossref_primary_10_1063_1_1499754}.

Post-selection partitions the total set of syndromes into an accept and an abort partition. However, as the classical spin Hamiltonian is parametrized by the Pauli error acting on the code, we instead treat post-selection as partitioning the total set of errors ${\mathcal{E}\in\mathcal{P}^{\otimes n}}$ into an accept and an abort partition, ${\mathcal{E}=\mathcal{E}_{\mathrm{accept}}\cup \mathcal{E}_{\mathrm{abort}}}$. 
By constraining errors which produce equivalent syndromes to lie in the same partition, we induce a canonical isomorphism between the partition of errors and syndromes.
Under the statistical mechanical mapping, the code's conditional logical threshold after post-selection can be determined by taking the quenched average over only the accepted set $\mathcal{E}_{\mathrm{accept}}$.

We define a post-selection rule $R$ by mapping a post-selection parameter $\gamma\in[0,1]$ to the abort partition of errors, $R(\gamma)=\mathcal{E}_{\mathrm{abort}}$.
Here, we set the convention that $|R(\gamma)|$ is nondecreasing with $\gamma$, $R(0)=\emptyset$, and $R(1)=\mathcal{E}$ is the maximal set of errors to abort upon given the post-selection rule. 
We define the map from the abort partition to the critical probability under a given decoder $\Pi: \{\mathcal{P}^{\otimes n}\} \rightarrow \mathbb{R}$, through $\Pi[\mathcal{E}_{\mathrm{abort}}]=p_{\mathrm{th}}^{\gamma}$. 
The composition of these maps may be taken to get a map from the post-selection parameter to the conditional logical threshold, $\Pi[R(\gamma)]=p_{\mathrm{th}}^{\gamma}$.

\textit{Optimal post-selection.}---QEC codes can be optimally decoded via MLD, which yields the probabilities for each logical error coset. Under the statistical mechanical mapping, these probabilities are given by the disordered partition functions $Z_{E}$ for representative errors $E$ from each logical class. For instance, in a surface code with a single logical qubit and a syndrome $S$, there are four inequivalent logical cosets. Defining the correction operators as $\{C_{i}\}_{i=1}^{4}$ with $C_{i}=C_{0}L_{i}$ (where $C_{0}$ satisfies $S$ and the $L_{i}$ span the logical space) captures this structure.

Under MLD of a QEC code, an abort parameter $\gamma$ can be chosen such that the decoder aborts if ${\max_{E\in \tilde{E}(S)}Z_{E}<\gamma}$, equivalently ${\max_{E\in \tilde{E}(S)}\mathds{P}(\bar{E})<\gamma}$, where $\tilde{E}(S)$ is the set of errors consistent with the observed syndrome $S$. Under MLD, the quenched average of the maximum disordered partition function, e.g., $[\max_{\sigma\in\mathcal{P}^{\otimes k}}Z_{\sigma E}]_{E}$ for a code with $k$ logical qubits, is the probability of decoding success, as the disordered partition functions correspond to the probabilities of each logical coset.

\begin{theorem}\label{thm:1}
    Post-selection that aborts if MLD returns a maximum coset probability less than some $\gamma\in[0,1]$ is optimal in the following sense. Such post-selection partitions the errors $\mathcal{E}$ into an abort set $\mathcal{E}_{\mathrm{abort}}$ of some cardinality $|\mathcal{E}_{\mathrm{abort}}|=r$. This partition strictly upper bounds $\mathbb{P}_{\mathrm{succ}}$ for any abort set with equivalent cardinality $r$.
\end{theorem}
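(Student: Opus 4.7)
My plan is to reformulate the claim as a cardinality-constrained combinatorial optimization and settle it by a standard sorting/exchange argument. The first step is to observe that the post-selected success probability decomposes additively over accepted syndromes as $\mathbb{P}_{\mathrm{succ}} = \sum_{S \in \mathcal{S}_{\mathrm{acc}}} p_S$, where $p_S := \max_{E \in \tilde{E}(S)} \mathbb{P}(\bar{E}) = \max_i Z_{C_i(S)}$ is exactly the maximum coset probability thresholded by the MLD rule. Maximizing $\mathbb{P}_{\mathrm{succ}}$ over cardinality-$r$ abort sets is then equivalent to minimizing $\sum_{S \in \mathcal{S}_{\mathrm{abort}}} p_S$ under the same constraint.

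Next I would exploit the homogeneity of the stabilizer code: for an $[[n,k]]$ code every syndrome has exactly $|\tilde{E}(S)| = 2^{n+k}$ consistent errors, and errors producing the same syndrome lie in the same partition by the canonical isomorphism noted immediately before the theorem. Hence $|\mathcal{E}_{\mathrm{abort}}| = r$ forces $r$ to be a multiple of $2^{n+k}$ and fixes the number of aborted syndromes to be $m := r/2^{n+k}$. The problem reduces to choosing $m$ syndromes so as to minimize the sum of their $p_S$ values, whose solution, by a one-line exchange argument, is to pick the $m$ syndromes with smallest $p_S$. Choosing $\gamma$ to be any value separating the $m$-th and $(m+1)$-th smallest $p_S$ produces exactly this abort set via the rule $\max_{E \in \tilde{E}(S)} \mathbb{P}(\bar{E}) < \gamma$, so the MLD-threshold rule attains the maximum and upper-bounds $\mathbb{P}_{\mathrm{succ}}$ for any alternative partition of the same cardinality.

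The step I expect to be the main obstacle --- more careful than difficult --- is the treatment of ties and the admissibility of $r$. If several syndromes share the threshold value $p_S = \gamma$, the optimum is not unique, so ``strictly upper bounds'' must be read as ``is a sharp upper bound attained by the MLD rule'' rather than a strict inequality against every rival rule. Similarly, when $r$ is not a multiple of $2^{n+k}$ one either abandons the syndrome-respecting partition or coarsens $\gamma$ to the next admissible cardinality. Modulo these technicalities, the theorem reduces to the greedy-selection statement that sorting syndromes by their MLD coset probability and aborting those with the smallest values is optimal.
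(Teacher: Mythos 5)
Your proposal is correct and follows essentially the same route as the paper's proof: both decompose $\mathbb{P}_{\mathrm{succ}}$ as a sum over accepted syndromes of the maximum coset probability and conclude by an exchange argument that the threshold rule, which retains exactly the syndromes with the largest such values, dominates any other abort set of equal cardinality. Your extra care about ties (where the claimed strictness can fail) and about $r$ needing to be a multiple of $2^{n+k}$ for a syndrome-respecting partition goes beyond the paper, which simply asserts the strict inequality without addressing either point.
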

\noindent A proof of Theorem 1 is given in the Supplemental Material.

\textit{Heuristic post-selection.}---As MLD for a general stabilizer code is \#P-Complete \cite{IyerPavithran2015HoDQ}, it remains computationally infeasible to perform optimal post-selection for quantum codes. 
Inspired by the statistical mechanical mapping and Refs.~\cite{cdi_doaj_primary_oai_doaj_org_article_fe27d63449cc48129d59ac0f6bebc5ad, cdi_proquest_journals_3052222419,cdi_proquest_journals_2899513284}, we propose a heuristic post-selection technique for surface codes. 
Given the statistical mechanical model of a quantum error correcting code, the non-equilibrium magnetization is $m\coloneq \frac{1}{|\mathcal{S}|}\sum_{k}s_{k}$, where the spin degrees of freedom correspond to the stabilizer measurements.
This magnetization is a good order parameter for the thermodynamic phases considered in this work, as shown in Ref.~\cite{cdi_crossref_primary_10_4171_AIHPD_105}, because it quantifies the density of violated stabilizers, which is anti-correlated with the logical success probability.
We now define a simple heuristic for post-selection of a surface code via the post selection rule
\begin{equation}
    R_{\text{heuristic}}(\gamma)=\{E\ |\ m(E)<-1+2\gamma\},
\end{equation}
where $m(E)$ denotes the non-equilibrium magnetization of the stabilizer measurements given some Pauli error~$E$. 
When $\gamma=0$, the set $\mathcal{E}_{\mathrm{abort}}=\emptyset$ and when $\gamma=1$, the set $\mathcal{E}_{\mathrm{abort}}$ is the set of errors that produce non-trivial syndromes. 
This heuristic post-selection rule has the advantage of simplicity over the logical gap rules considered in Refs.~\cite{cdi_proquest_journals_3052222419, cdi_doaj_primary_oai_doaj_org_article_fe27d63449cc48129d59ac0f6bebc5ad,cdi_proquest_journals_2899513284}. 
Specifically, the decision of whether to abort does not require a decoder, but rather only requires the density of stabilizers measuring $-1$ on the code.
We detail the conditions of error channels suitable for this rule in Appendix A.

We now apply the statistical mechanical mapping to determine the conditional logical thresholds under the heuristic post-selection rule for surface codes. When ${\gamma=0}$ we never abort, and the statistical mechanical Hamiltonian's behaviour is determined by taking the quenched average over all possible errors $E\in\mathcal{P}^{\otimes n}$. For a pure bit-flip or pure phase-flip channel, toric and surface codes are mapped to a random-bond Ising model (RBIM)~\cite{cdi_crossref_primary_10_1063_1_1499754}, which leads to a MLD threshold of $p_{\mathrm{th}}^{\gamma=0}\approx 10.94\%$ \cite{HoneckerA2001Ucot} and a MWPM threshold of $p_{\mathrm{th}}^{\gamma=0}\approx 10.31\%$ \cite{WangChenyang2003Ctia}.

When $\gamma=1$, we abort whenever the syndrome is non-trivial. 
In this case, the set $\mathcal{E}_{\mathrm{accept}}$ corresponds to the product of stabilizer generators and a chosen set of logical representatives $\{ \prod_{k}S_{k}L_{m}\}$.
The abort partition is the set of errors which produce non-trivial syndromes. 
By the symmetry of the Hamiltonian, we have $Z_{S_{l}}=Z_{S{m}}$ for all $S_{l},S_{m}\in \mathcal{S}$. When the quenched average free energy is computed, above threshold, the free energy cost of a non-trivial logical operator converges to a constant (see Eq. (36) of Ref.~\cite{cdi_crossref_primary_10_1063_1_1499754}). Hence, the non-trivial logical cosets do not contribute to order parameters for this conditional logical threshold, and we may consider only the identity coset.
For the identity coset, we have $Z_{IS_{k}}=Z_{I}$ for all stabilizers $S_{k}$, and we note $Z_{I}$ corresponds to the non-disordered (or ``clean" \cite{2406.15757}) Hamiltonian $H_{I}$. For each of the partition functions, the underlying Hamiltonian may be considered equivalent across all disorder parameters, and is given by $H_{I}$.

For the disordered random-bond Ising model, with $E=I$, the scalar commutators are uniformly $+1$, and the Hamiltonian becomes the non-disordered Ising model. The Ising model can be exactly solved to find $\beta J_{c}=\frac{\log(1+\sqrt{2})}{2}$ \cite{OnsagerLars1944CSIA}. Substituting this value into the Nishimori conditions, we arrive at $p_{\mathrm{th}}^{\gamma=1}=\frac{1}{2+\sqrt{2}}\approx0.2929$. While this gives the maximum post-selected conditional logical threshold under MLD, the decoding strategy is simply to do nothing regardless of the decoder that is used away from full post-selection, such as minimum-weight perfect matching (MWPM). We discuss thresholds for depolarizing noise in Appendix B.

For $0<\gamma<1$, the post-selected conditional logical threshold lies between the two bounds computed above.

\textit{Abort threshold.}---The existence of an abort threshold is an important necessary condition for post-selected QEC to be scalable. The existence of abort thresholds were first numerically observed in Ref.~\cite{cdi_proquest_journals_3052222419}. Abort thresholds are the phenomenon in which in the thermodynamic limit, physical error probabilities above a critical value result in aborting with certainty, while error probabilities below the critical value result in acceptance with certainty. The abort thresholds of the heuristic post-selection rule can be accurately obtained using a mean field approximation on the spins, which we derive in Appendix C.

\begin{figure}[t]
    \centering
    \includegraphics[width=\linewidth]{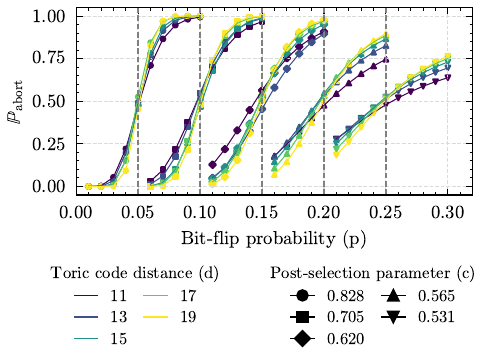}
    \caption{Abort thresholds of toric codes under bit-flip error channels. Dashed lines indicate analytic predictions using the mean field approximation. Markers indicate numerical results.}
    \label{fig:abort}
\end{figure}

We compute the expected magnetization under the mean field approximation given bit-flip probabilities of ${p=\{0.05, 0.1, 0.15, 0.2, 0.25\}}$, and sample $N=10^{7}$ errors from these noise channels acting on toric codes of varying sizes, computing abort probabilities given a post-selection parameter equal to the expected magnetization at these probabilities. The results are shown in Fig.~\ref{fig:abort}, demonstrating threshold behaviour near the bit-flip probabilities predicted.

Following the discussion in this section, similar abort thresholds can be computed analytically for any i.i.d. noise model, e.g., depolarizing noise, straightforwardly by computing the expected magnetization. 
For the optimal post-selection, the abort threshold is fixed at the code's non-post-selected logical threshold.

\textit{Numerical simulations.}---We now investigate the post-selected conditional logical thresholds under our heuristic post-selection rule. We expect that the post-selected conditional logical threshold will increase monotonically with the post-selection parameter $\gamma$. We analyze toric codes with parameters $[[2L^{2},2,L]]$. We use the Metropolis-Hastings algorithm for a Markov-Chain Monte Carlo (MCMC) method \cite{HastingsW.K.1970MCsm} to sample the conditional probability distribution $\mathbb{P}(E|m(S)>m_{0})$, and we decode the errors using PyMatching \cite{higgott2023sparse}. We use the fitting ansatz $\mathbb{P}_{\mathrm{fail}}=A+Bx+Cx^{2}$ with $x=(p-p_{\mathrm{th}})d^{\frac{1}{\nu{0}}}$ \cite{WatsonFernHE2014Lers} to extract an estimate of the conditional threshold from the results and indicate the estimate with a dashed line. The results are presented in Fig.~\ref{fig:thresholds}.

\begin{figure}[t]
    \centering
    \includegraphics[width=\linewidth]{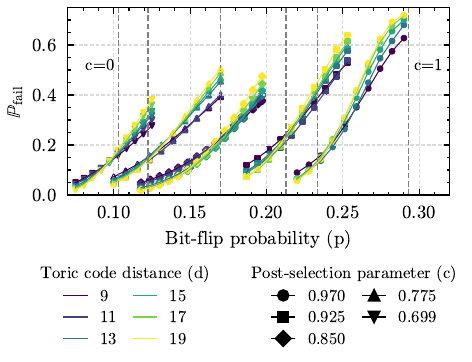}
    \caption{Conditional post-selected logical thresholds for post-selection parameter $0.6987\lessapprox \gamma<1$. Notation in the plot is similar to Fig.~\ref{fig:abort}.}
    \label{fig:thresholds}
\end{figure}

The post-selected conditional logical threshold increases with the post-selection parameter. In other words, as we constrain the non-equilibrium magnetization of the statistical mechanical model more tightly, we increase the phase transition to a higher disorder probability. Moreover, for $\gamma\lessapprox 0.6987$, the logical threshold of the code is unchanged. This critical value can be determined by the expected magnetization at $p=p_{\mathrm{th}}^{\gamma=0}\approx 0.103$. This point arises as a multi-critical point on the phase diagram, which we explain in Appendix D.

\textit{Thermodynamic phases.}---Through post-selection parameterized by $\gamma\in[0,1]$, we obtain conditional thresholds of $p_{\mathrm{log},\mathrm{th}}(\gamma)$ and $p_{\mathrm{abort},\mathrm{th}}(\gamma)$ for the logical error rates and probability of aborting, respectively. Under the heuristic post-selection rule with a bit-flip channel, we expect that the conditional logical thresholds between the two bounds (i.e., at $\gamma=0$ and $\gamma=1$) vary continuously. However, they are difficult to numerically analyze on small-scale code examples, as densities of non-trivial stabilizer measurements on finite codes are discrete and the cardinality of the stabilizers of small code instances of differing sizes generally do not have common factors. Nonetheless, we expect in the thermodynamic limit that clear thresholds emerge.

We plot in Fig.~\ref{fig:threshold_diagram} the logical and abort thresholds as a function of the post-selection parameter $\gamma$. We conjecture that the shaded region in Fig.~\ref{fig:threshold_diagram} corresponds to useful scalable post-selection.  Working in this region is below both abort and logical thresholds, meaning logical failure rates and abort rates can be made arbitrarily small by increasing the code size, but with the advantage that the logical threshold is increased over the non-post-selected case. If a system is above the logical threshold without post-selection, applying post-selection will not bring it below the conditional logical threshold without exceeding the abort threshold.  Instead, we emphasise that, although post-selection can increase the logical threshold, the benefits of this increased logical threshold are best harnessed by working with finite-sized codes well below this logical threshold. By assuming a finite-size scaling ansatz of the form $\mathbb{P}_{\mathrm{fail}}\sim f((p-p_{c})d^{\frac{1}{\nu}})$, post-selection can increase $p_{c}$, thereby shifting the scaling function $f$ further from criticality \cite{WatsonFernHE2014Lers}.  There is also strong numerical evidence that post-selection further improves the sub-threshold scaling~\cite{cdi_proquest_journals_3052222419}.  Our conjecture emphasizes the importance of sub-threshold behaviour in finite-sized codes for practical applications, rather than focusing solely on raising the threshold for noisy devices.

\begin{figure}[t]
    \centering
\includegraphics[width=0.75\linewidth]{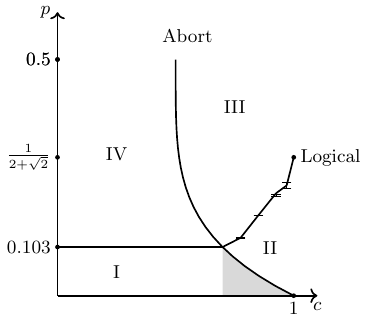}
    \caption{Phase diagram of toric code under the heuristic post-selection rule with bit-flip error channel. The x-axis represents the post-selection parameter $\gamma$. When $\gamma=0$, no post-selection is applied and when $\gamma=1$, we fully post-select, aborting whenever the syndrome is non-trivial. The y-axis represents the i.i.d.~bit-flip probability acting on physical qubits. The abort phase boundary is derived using a mean-field approximation of the non-equilibrium magnetization of the spins. The logical phase boundary is trivial for $\gamma\lessapprox 0.6987$ and the multi-critical point can be determined by the expected magnetization at $p=p_{\mathrm{th}}^{\gamma=0}$. Monte-Carlo simulations are used to estimate the non-trivial boundary. Four thermodynamic phases are identified: I. below abort and logical threshold, II. above abort threshold and below logical threshold, III. above abort and logical threshold, IV. below abort threshold and above logical threshold. The shaded area indicates a regime in which the conditional logical threshold is increased, while remaining below abort threshold. Error bars indicate $\pm$ one standard deviation of the parameter estimate of threshold probability.}
    \label{fig:threshold_diagram}
\end{figure}

\textit{Post-selected QEC in practice.}---We now apply the heuristic post-selection method to open source data reported from experiments where a surface-code memory was initialized on a superconducting processor~\cite{2408.13687}. While our theory employs an idealized bit‑flip noise model for analytic tractability, the statistical rationale in Appendix A suggests this heuristic can be usefully extended to random local circuit‑level noise.  Its effectiveness may depend on many factors including the code, circuit, schedule, and the relevant set of logical faults (e.g.,~in generalized lattice surgery) that extend beyond our present analysis.  We use the Google experimental data as an exploratory test of the usefulness of our heuristic in this setting, and find that our rule can improve logical error rates ex post facto, supporting its practical utility.
An XZZX surface code \cite{BonillaAtaidesJPablo2021TXsc} was initialized with distances 3, 5, and 7.
We apply the heuristic post-selection to these results for a single round of error correction.
Fig.~\ref{fig:google_plot} shows the abort and logical failure probabilities.

For no post-selection or moderate levels of post-selection $\gamma=0,0.85,0.9$, logical failure and abort probabilities decrease as the code distance increases, consistent with sub-threshold scaling for each metric (i.e., Phase I behaviour). As the post-selection parameter increases to $\gamma=0.95$, logical failure probability still decays with distance, but the abort probability grows instead, consistent with Phase II scaling.
Linear regression was applied to the log-linear plots, followed by hypothesis testing to determine whether the gradient was positive (above threshold) or negative (below threshold). Notably, 6 out of 7 regressions yielded one-sided p-values below 1\%, providing statistical evidence for this sub-threshold versus above-threshold scaling behaviour.

\begin{figure}[t]
    \centering
    \includegraphics[width=\linewidth]{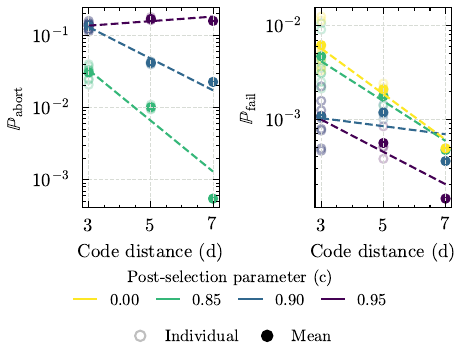}
    \caption{(Left) Abort probabilities and (right) logical failure probabilities for a single round of error correction on XZZX surface codes initialized on a superconducting quantum computer using the heuristic post-selection \cite{2408.13687}. We use the optimized decoding results reported in Ref.~\cite{bausch2023learningdecodesurfacecode} that were produced by a neural network decoder with reinforcement learning. We apply the heuristic post-selection rule on this data ex post facto to estimate abort probabilities and conditional logical failure probabilities. Transparent markers indicate individual data of surface codes initialized on different physical qubits. Solid markers indicate averages over all instances of distance $d$ codes. An exponential curve as a function of $d$ is fit to each set of data.}
    \label{fig:google_plot}
\end{figure}

\textit{Conclusion.}---Our results point to several promising directions for future work.
Our proposed heuristic post-selection technique offers practical advantages without the need for resource-intensive decoding. This can facilitate real-time decision-making and reducing computational overhead in post-selected quantum error correction.
The post-selected statistical mechanical models we have studied suggest new avenues of research in constrained classical spin models, whereby the ordered-to-disordered phase transition can be deformed with a post-selection parameter.
While our analytic treatment explores the code capacity setting, we have demonstrated the heuristic’s applicability to realistic circuit-level noise in an exploratory example using Google’s experimental data. A comprehensive study of performance under circuit-level noise is left for future work.

\textit{Acknowledgments.}---We thank Samuel Smith and Christopher Chubb for discussions.  This work is supported by the Australian Research Council via the Centre of Excellence in Engineered Quantum Systems (EQUS) project number CE170100009, and by the ARO through the QCISS program W911NF-21-1-0007 and IARPA ELQ program W911NF-23-2-0223.

\section{End Matter}
\textit{Appendix A: Effectiveness of heuristic post-selection rule.}---The effectiveness of the heuristic post-selection rule originates from the anti-correlation between the density of violated stabilizers and the logical success probability. This anti-correlation is not merely an empirical observation for simple noise models but stems from a general statistical argument that holds for a broad class of realistic, circuit-level noise channels including Pauli errors, gate faults, measurement errors, and other locally correlated, non-adversarial processes. The argument proceeds as follows: 

In any fault-tolerant implementation of a local stabilizer code, physical faults, whether Pauli errors, gate errors, or measurement errors, generate localized syndrome violations in the full spacetime history of the computation. Consequently, a high density of syndrome violations is a sufficient (but not necessary) condition to infer a high density of underlying physical faults. A single experimental run that exhibits a high density of syndrome violations is statistically difficult to distinguish from a typical run generated by a much noisier process. We can think of such a run as being drawn from a channel with a higher effective error rate, $p_{\mathrm{eff}}>p$. Our simple, decoder-free heuristic acts as an efficient, albeit imperfect, proxy for this underlying fault density. While optimal post-selection utilizing a maximum-likelihood decoder on the spacetime graph will certainly outperform this simple heuristic, a high spatial density is a strong indicator of a run that is likely to fail. It is a foundational feature of error correction that for a fixed-size code, the logical error rate is a monotonically increasing function of the physical error rate. By post-selecting against runs with high syndrome density (our proxy for high $p_{\mathrm{eff}}$), we are systematically discarding the runs that are likely to fail. This biases the accepted ensemble towards a lower effective error rate, thereby reducing the overall logical failure probability. While this principle is general, our analytic results in the main text focus on independent error channels (specifically bit-flip noise) for their tractability, which allows us to derive quantitative thresholds and phase diagrams. However, the rule may fail for highly correlated or adversarial noise channels, which would require tailored approaches beyond the scope of this work.

While our analytic thresholds and phase diagrams in the main text are derived under independent bit-flip noise for tractability, the same local syndrome density argument applies to general circuit-level noise. The success of our heuristic on Google’s surface code data, which contains full circuit-level noise and correlated error effects, supports its robustness beyond the idealized model, validating its potential as a practical tool. Although circuit‑level simulations of surface codes are routinely performed, a numerical study of our heuristic’s performance is challenging because it targets the tail of the syndrome distribution at low physical error rates and larger code distances. Sampling enough high syndrome density runs to gather meaningful statistics would require prohibitive computational resources and specialized importance‑sampling techniques, so we leave a full numerical exploration to future work.

\textit{Appendix B: Surface codes under depolarizing noise.}---The toric or surface codes with depolarizing noise without post-selection (i.e., $\gamma=0$) map to the disordered eight-vertex model, and the threshold can be approximated numerically as in Ref.~\cite{cdi_doaj_primary_oai_doaj_org_article_be091cd1287f4f45881801af0e724334}. 
Under MLD, this corresponds to a threshold of $p_{\mathrm{th}}^{\gamma=0}\approx18.9\%$~\cite{cdi_doaj_primary_oai_doaj_org_article_be091cd1287f4f45881801af0e724334}.
Under MWPM decoding, the threshold for this code is instead $p_{\mathrm{th}}^{\gamma=0}\approx15.5\%$ \cite{10.5555/2011362.2011368}. In the fully post-selected regime, the Hamiltonian becomes the non-disordered isotropic eight-vertex model, which can be solved exactly to give $\beta J_{c} = \frac{1}{4}\text{log}(3)$ \cite{alma991011130779705106}. 
If we substitute this value into the Nishimori conditions, we can rearrange to arrive at $p_{\mathrm{th}}^{\gamma=1}=0.5$, consistent with the numerical estimate in Ref.~\cite{cdi_proquest_journals_3052222419}.

\textit{Appendix C: Abort threshold derivation.}---To obtain the mean field approximation, we compute the expected magnetization in the thermodynamic limit of the spins assuming no inter-spin correlations. For example, consider a toric code under bit-flip noise with probability $p$. Each $\sigma_{z}$-type stabilizer has support on four qubits, each of which has a probability $p$ of suffering a bit-flip. 
Then, the probability that measuring the stabilizer $S_{k}$ returns $-1$ is ${\mathbb{P}(s_{k}=-1)={4 \choose 1}p(1-p)^{3} + {4 \choose 3}p^{3}(1-p)}$, while the probability that it returns $+1$ is
${\mathbb{P}(s_{k}=+1)={4 \choose 0}(1-p)^{4} + {4 \choose 2}p^{2}(1-p)^{2} + {4 \choose 4}p^{4}}$. Then, the expected value of any particular stabilizer's individual magnetization is $m(p)=\mathbb{E}[m_{s_{k}}] = -4(p(1-p)^{3} + p^{3}(1-p)) + (1-p)^{4} + 6p^{2}(1-p)^{2} + p^{4}$.

As each qubit lies in the support of two of each type of stabilizer (e.g., two plaquettes), the probabilities of the spin degrees of freedom are not independent.
However, for $p\ll 1$, we may approximate the stabilizers as being independent using a mean-field approximation. 
Under this assumption, in the thermodynamic limit, by the law of large numbers, the total system's magnetization tends towards the expected value given above. 
If we fix the post-selection parameter $m_{0}=-1+2\gamma$, then in the thermodynamic limit, for $m(p)<m_{0}$ we almost surely abort, and for $m(p)>m_{0}$, we almost surely accept, leading to threshold behaviour.

\textit{Appendix D: Multicritical point.}---Under the mean-field approximation, the expected non-equilibrium magnetization of the surface code under bit-flip noise possesses a $\mathbb{Z}_{2}$ symmetry, as it is invariant under the transformation $p\rightarrow 1-p$, and in fact on the domain $p\in[0,1]$, $\mathbb{E}[m_{s_{k}}]\geq 0$. Moreover, in the thermodynamic limit, for $\gamma<\gamma_{\mathrm{crit}}$, where $\gamma_{\mathrm{crit}}=\gamma\ |\ \mathbb{E}[m(\gamma)]=p_{\mathrm{th}}^{\gamma=0}$, we never abort at logical threshold. Because of this, we expect the conditional logical threshold to be unchanged. However, for $\gamma>\gamma_{\mathrm{crit}}$, at $p=p_{\mathrm{th}}^{\gamma=0}$, we are above the abort threshold, and thus the location of this phase transition can change non-trivially. Thus, in Fig.~\ref{fig:threshold_diagram}, we find the separation of phases I and IV to be parallel to the x-axis, while the separation of phases II and III to be non-trivial.

\newpage
\pagebreak
\clearpage
\widetext
\appendix

\begin{center}
\textbf{\large Supplemental Material for “Thresholds for post-selected quantum error correction from statistical mechanics”}\\[6pt]
Lucas H.~English,\; Dominic J.~Williamson,\; Stephen D.~Bartlett\\
\emph{School of Physics, University of Sydney, Sydney, New South Wales 2006, Australia}\\
(Dated: \today)
\end{center}
\vspace{1em}

\section{S1. Recap of Pauli Stabilizer Codes}

Quantum error correcting codes are a mapping from a \textit{logical} Hilbert space onto a larger \textit{physical} Hilbert space. The logical degrees of freedom are encoded in a subspace of the physical Hilbert space, $\mathcal{H}_{\mathrm{log}}\subseteq\mathcal{H}_{\mathrm{phy}}$. We say a code has parameters $[[n,k,d]]$ if $\dim(\mathcal{H}_{\mathrm{phy}})=2^{n}$, $\dim(\mathcal{H}_{\mathrm{log}})=2^{k}$ and $d=\min\{w(L)\ |\ L\neq I, L\ket{\psi}\in\mathcal{H}_{\mathrm{log}}$ for all $\ket{\psi}\in\mathcal{H}_{\mathrm{log}}\}$. That is, the distance $d$ is the smallest Hamming weight of a non-trivial logical operator.

Pauli stabilizer codes rely on the stabilizer formalism, in which we specify the logical subspace as the space stabilized by a group of operators we call the stabilizer group $S$. That is, $s\ket{\psi}=+1\ket{\psi}$ for all $s\in S$ and all $\ket{\psi}\in\mathcal{H}_{log}$. For Pauli stabilizer codes, elements of the stabilizer group are tensor products of Pauli operators, that is $s=\sigma_{i_{1}}\otimes\sigma_{i_{2}}\otimes\cdots\otimes\sigma_{i_{n}}\in\mathcal{P}^{\otimes n}$.

Under i.i.d. Pauli noise channels, the probability distribution of errors $E\in\mathcal{P}^{\otimes n}$ takes the form
\begin{equation}
    \mathds{P}(E)=\prod_{i=1}^{n}\mathds{P}(E_{i}),
\end{equation}
that is, the product of the probabilities of Paulis affecting each qubit independently.

\section{S2. Proof of Theorem 1}

\begin{lemma}
Post-selection which aborts when the posterior confidence under MLD is less than some constant $\gamma\in[0,1]$ monotonically increases decoding likelihood with respect to $\gamma$.
\end{lemma}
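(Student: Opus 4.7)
The plan is to express the decoding likelihood as a conditional expectation of the per-syndrome MLD confidence over the accepted set, and then observe that increasing $\gamma$ removes exactly those accepted syndromes on which this confidence is smallest, so the average cannot decrease.

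First I would fix notation. For each syndrome $S$, let $c(S)=\max_{\bar E}\mathbb{P}(\bar E\mid S)$ be the MLD posterior confidence; under the statistical mechanical mapping from the main text, $c(S)=\max_E Z_E/\sum_E Z_E$, so the abort rule in the lemma (equivalently, the rule in Theorem~1 up to the normalisation by $\sum_E Z_E$) accepts $S$ iff $c(S)\geq \gamma$. By the definition of MLD, conditioned on observing $S$ the decoder succeeds with probability exactly $c(S)$. The ``decoding likelihood'' given acceptance at parameter $\gamma$ is then
\begin{equation*}
\mathbb{P}_{\mathrm{succ}}(\gamma)
=\frac{\sum_{S:\,c(S)\geq\gamma}\mathbb{P}(S)\,c(S)}{\sum_{S:\,c(S)\geq\gamma}\mathbb{P}(S)}
=\mathbb{E}\!\left[c(S)\,\middle|\,c(S)\geq\gamma\right].
\end{equation*}

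Next I would prove monotonicity of this tail conditional expectation. For $\gamma_{1}<\gamma_{2}$, set $A_{i}=\{S:c(S)\geq\gamma_{i}\}$, so $A_{2}\subseteq A_{1}$, and let $B=A_{1}\setminus A_{2}$. Splitting the numerator and denominator over $A_{2}\cup B$,
\begin{equation*}
\mathbb{P}_{\mathrm{succ}}(\gamma_{1})
=\lambda\,\mathbb{P}_{\mathrm{succ}}(\gamma_{2})+(1-\lambda)\,\mathbb{E}\!\left[c(S)\,\middle|\,S\in B\right],
\qquad
\lambda=\frac{\sum_{S\in A_{2}}\mathbb{P}(S)}{\sum_{S\in A_{1}}\mathbb{P}(S)}\in[0,1].
\end{equation*}
On $A_{2}$ one has $c(S)\geq\gamma_{2}$, so $\mathbb{P}_{\mathrm{succ}}(\gamma_{2})\geq\gamma_{2}$; on $B$ one has $c(S)<\gamma_{2}$, so the second term is $<\gamma_{2}$. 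The convex combination therefore satisfies $\mathbb{P}_{\mathrm{succ}}(\gamma_{1})\leq\mathbb{P}_{\mathrm{succ}}(\gamma_{2})$, which is the desired monotonicity. (Strict inequality holds whenever $B$ carries positive probability mass.)

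The main obstacle is not algebraic but conceptual: pinning down that ``decoding likelihood'' really equals $\mathbb{E}[c(S)\mid\mathrm{accept}]$. This uses two facts that should be stated carefully: (i) conditional on a syndrome $S$, the logical outcome of MLD is independent of the post-selection decision, since the decision depends only on $S$; and (ii) the per-syndrome success probability under MLD is by definition $c(S)$. Once these are spelled out, the result reduces to the elementary observation that a tail conditional expectation of a random variable with respect to its own level sets is nondecreasing in the threshold, which is the content of the convex-combination step above.
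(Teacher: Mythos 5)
Your proposal is correct and follows essentially the same route as the paper's proof: both express the post-selected success probability as a tail conditional expectation $\mathbb{E}[c(S)\mid c(S)\geq\gamma]$ and establish monotonicity by decomposing the expectation at the looser threshold as a convex combination of the expectation over the tighter accept set and the expectation over the stripped-away level set $\{\gamma'\leq c(S)<\gamma\}$, with the latter bounded above by the former. Your version is if anything slightly cleaner, since you state the renormalisation explicitly and make the intermediate bound $\mathbb{E}[c(S)\mid S\in B]<\gamma_{2}\leq\mathbb{P}_{\mathrm{succ}}(\gamma_{2})$ explicit where the paper only asserts the corresponding inequality between conditional expectations.
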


\begin{proof}
    We first show that post-selection as above for some $\gamma\in[0,1]$ lower bounds the probability of decoding success. Without post-selection we have
    \begin{equation}
        \mathbb{P}_{\mathrm{succ}}=\sum_{S}\mathbb{P}(S)\mathbb{P}(\mathrm{succ}|S),
    \end{equation}
    where $S$ denotes the measured syndrome, $\mathbb{P}(\mathrm{succ}|S)=\max_{\sigma\in\mathcal{P}^{\otimes k}}Z_{\sigma E(S)}$, with $E(S)$ being a representative error consistent with the syndrome $S$. By post-selecting, we partition the set $S$ into $S=S_{\mathrm{abort}}\cup S_{\mathrm{accept}}$, where the new probabilities over $S_{\mathrm{accept}}$ are normalized through
    
    \begin{equation}
        \mathbb{P}(S\in S_{\mathrm{accept}})=\frac{\mathbb{P}(S)}{\sum_{S\in S_{\mathrm{accept}}}\mathbb{P}(S)}.
    \end{equation}
    
    Assuming that $\mathbb{P}(\mathbb{P}(\mathrm{succ}|S)\geq\gamma)\neq 0$, the conditional probability of decoding success given a post-selection parameter $\gamma$ is
    \begin{align}
        \mathbb{P}_{\mathrm{succ}}^{\gamma}&=\sum_{S\in S_{\mathrm{accept}}}\mathbb{P}(S)\mathbb{P}(\mathrm{succ}|S),\\
        &\geq \sum_{S\in S_{\mathrm{accept}}}\mathbb{P}(S) \gamma,\\
        &= \gamma\sum_{S\in S_{\mathrm{accept}}}\mathbb{P}(S),\\
        &= \gamma.
    \end{align}

    That is, the probability of decoding success is lower bounded by $\gamma$. Note that we can trivially upper bound this probability also through aborting whenever the uncertainty is less than some constant $\gamma'$. Next, we show that by post-selection with parameter $\gamma$ increases the probability of decoding success without post-selection. The expectation of $\mathbb{P}(\mathrm{succ})$ over the entire set of syndromes $S$ can be expressed using the law of total expectation:

    \begin{align}
        \mathbb{E}[\mathbb{P}(\mathrm{succ})] &= \overbrace{\mathbb{E}[\mathbb{P}(\mathrm{succ})|\mathbb{P}(\mathrm{succ})\geq\gamma]}^{\geq\gamma}\mathbb{P}(\mathbb{P}(\mathrm{succ})\geq\gamma) + \overbrace{\mathbb{E}[\mathbb{P}(\mathrm{succ})|\mathbb{P}(\mathrm{succ})< \gamma]}^{< \gamma}\mathbb{P}(\mathbb{P}(\mathrm{succ})< \gamma),\\
        &\leq \mathbb{E}[\mathbb{P}(\mathrm{succ})|\mathbb{P}(\mathrm{succ})\geq\gamma].
    \end{align}

    So $\mathbb{P}_{\mathrm{succ}}\leq \mathbb{P}_{\mathrm{succ}}^{\gamma}$ with equality if and only if $\mathbb{P}(\mathbb{P}(\mathrm{succ}< \gamma))=0$, that is, all syndromes lead to a posterior confidence greater than or equal to $\gamma$. Finally, we show that by increasing the post-selection, we may increase the probability of decoding success. First we note that for two constants $\gamma,\gamma'\in[0,1]$ with $\gamma>\gamma'$,

    \begin{equation}
        \{S|\mathbb{P}(\mathrm{succ}|S)\geq\gamma\}\subseteq \{S|\mathbb{P}(\mathrm{succ}|S)\geq\gamma'\}.
    \end{equation}

    As with above, we expand the expectation value of $\mathbb{P}_{\mathrm{succ}}^{\gamma'}$ through:

    \begin{align}
        \mathbb{E}[\mathbb{P}(\mathrm{succ})|\mathbb{P}(\mathrm{succ})\geq\gamma']&=\mathbb{E}[\mathbb{P}(\mathrm{succ})|\mathbb{P}(\mathrm{succ})\geq\gamma]\mathbb{P}(\mathbb{P}(\mathrm{succ})\geq\gamma|\mathbb{P}(\mathrm{succ})\geq\gamma')\\
        &+\mathbb{E}[\mathbb{P}(\mathrm{succ})|\gamma'\leq\mathbb{P}(\mathrm{succ})< \gamma]\mathbb{P}(\gamma'\leq\mathbb{P}(\mathrm{succ})< \gamma|\mathbb{P}(\mathrm{succ})\geq\gamma')
    \end{align}

    and $\mathbb{E}[\mathbb{P}(\mathrm{succ})|\mathbb{P}(\mathrm{succ})\geq\gamma)]>\mathbb{E}[\mathbb{P}(\mathrm{succ})|\gamma'\leq\mathbb{P}(\mathrm{succ})<\gamma]$. Therefore,

    \begin{equation}
        \mathbb{E}[\mathbb{P}(\mathrm{succ})|\mathbb{P}(\mathrm{succ})\geq\gamma']\leq\mathbb{E}[\mathbb{P}(\mathrm{succ})|\mathbb{P}(\mathrm{succ})\geq\gamma],
    \end{equation}

    that is, $\mathbb{P}_{\mathrm{succ}}^{\gamma'}\leq \mathbb{P}_{\mathrm{succ}}^{\gamma}$ with equality if and only if $\mathbb{P}(\gamma'\leq\mathbb{P}(\mathrm{succ})< \gamma|\mathbb{P}(\mathrm{succ})\geq\gamma')=0$.

\end{proof}

Now, we use this lemma to prove Theorem 1.

\begin{proof}
    First we note that by virtue of MLD, we can write

    \begin{equation}\label{eq:psucc}
        \mathbb{P}_{\mathrm{succ}} = [\max_{\sigma\in\mathcal{P}^{\otimes k}}Z_{\sigma E(S)}]_{S_{\mathrm{accept}}}.
    \end{equation}
    
    As above, we define $\{S_{\mathrm{accept}}=S|\max_{\sigma\in\mathcal{P}^{\otimes k}}Z_{\sigma E(S)}>\gamma\}$. Then, for any other choice of post-selection which partitions $S_{\mathrm{accept}}$ with equivalent cardinality but different elements, at least one element of $S_{\mathrm{accept}}$ must have $\max_{\sigma\in\mathcal{P}^{\otimes k}Z_{\sigma E(S)}}<\gamma$ strictly. Therefore, by Eq. \ref{eq:psucc}, $\mathbb{P}_{\mathrm{succ}}$ must be strictly smaller for any other post-selection. Hence, this post-selection is optimal.
\end{proof}

\section{S3. Analytic and numerical investigation of fully post-selected logical thresholds}
The Hamiltonian of the toric code under a bit-flip error channel is

\begin{equation}\label{eq:rbim}
    H_{E} = -\sum_{i,j}(-2K + J\llbracket Z,E_{i,j,\leftrightarrow}\rrbracket s_{i,j}^{X}s_{i-1,j}^{X}+ J\llbracket Z,E_{i,j,\updownarrow}\rrbracket s_{i,j}^{X}s_{i,j-1}^{X}),
\end{equation}

in which the Nishimori conditions relate the temperature and bit-flip probability through

\begin{equation}\label{eq:rbim_nishimori}
\begin{split}
    K &= \frac{1}{2\beta}\log(\frac{1}{p(1-p)}),\\
    J &=\frac{1}{2\beta}\log(\frac{1-p}{p}).
\end{split}
\end{equation}

The Hamiltonian of the toric code under a depolarizing error channel is

\begin{equation}\label{eq:eightvert}
\begin{split}
    H_{E} = -\sum_{i,j}(-K &+ J\llbracket X,E_{i,j,\leftrightarrow}\rrbracket s_{i,j}^{Z}s_{i,j+1}^{Z}+ J\llbracket Z,E_{i,j,\updownarrow}\rrbracket s_{i,j}^{Z}s_{i+1,j}^{Z}+ J\llbracket Z,E_{i,j,\updownarrow}\rrbracket s_{i,j}^{Z}s_{i-1,j}^{Z}\\
    &+ J\llbracket X,E_{i,j,\updownarrow}\rrbracket s_{i,j}^{Z}s_{i,j-1}^{Z}+ J\llbracket X,E_{i,j,\updownarrow}\rrbracket s_{i,j}^{Z}s_{i,j+1}^{Z}s_{i,j}^{Z}s_{i-1,j}^{Z}+ J\llbracket X,E_{i,j,\updownarrow}\rrbracket s_{i,j}^{Z}s_{i+1,j}^{Z}s_{i,j}^{Z}s_{i,j-1}^{Z}),
\end{split}
\end{equation}

where the Nishimori conditions are
\begin{equation}\label{eq:eightvert_nishimori}
\begin{split}
    \beta K = \frac{1}{2}\text{log}\frac{27}{p^{3}(1-p)},\\
    \beta J = \frac{1}{4}\text{log}\frac{3(1-p)}{p}.
\end{split}
\end{equation}

We numerically verify the upper bounds on the conditional logical thresholds for depolarizing and bit-flip channels computed in the main article. To perform MLD, we use a matrix product state (MPS)-based tensor network (TN) decoder \cite{BravyiSergey2014Eafm} implemented in qecsim \cite{qecsim}. We use full contraction of a TN decoder, computing the coset probabilities of a planar surface code of distance $d$ when the syndrome is trivial. Thresholds of planar surface codes and toric codes are equivalent, as the thermodynamics are unchanged by the boundaries when $n\rightarrow\infty$. In Figs. \ref{fig:bitflip-th} and \ref{fig:depolarizing-th} we plot the posterior confidence that the identity coset will correct successfully (i.e., that by doing nothing, no error has occurred on the logical qubit) for bit-flip and phase-flip channels, respectively. In this fully post-selected regime, i.e., when $\gamma=1$, the pseudothreshold (or breakeven) of the code is equal to the threshold.

\begin{figure}[h]
    \centering
    \includegraphics[width=0.5\columnwidth]{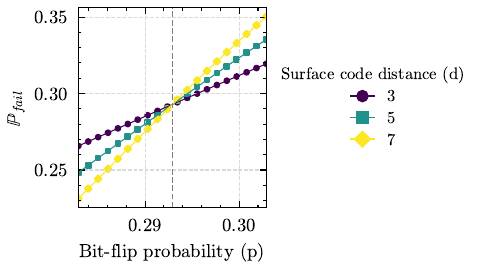}
    \caption{Probability of decoding failure by doing nothing when syndrome is trivial for a bit-flip channel. The dashed line indicates the theoretical threshold value computed in the main article.}
    \label{fig:bitflip-th}
\end{figure}
\begin{figure}[h]
    \centering
    \includegraphics[width=0.5\columnwidth]{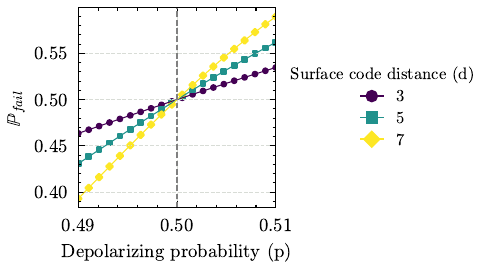}
    \caption{Probability of decoding failure by doing nothing when syndrome is trivial for a depolarizing channel. The dashed line indicates the theoretical threshold value computed in the main article.}
    \label{fig:depolarizing-th}
\end{figure}

\section{Abort thresholds under optimal post-selection}
Below logical threshold of a code, 

\begin{equation}
    \lim_{n\rightarrow\infty}\mathbb{P}(\max_{E}\mathbb{P}(\bar{E})\geq 1-\delta)=1\quad \forall \delta\neq 0,
\end{equation}

that is, the maximum coset probability approaches 1 almost surely. Similarly, above logical threshold,

\begin{equation}
    \lim_{n\rightarrow\infty}\mathbb{P}(\max_{E}\mathbb{P}(\bar{E})\leq 1/K - \eta)=1\quad \forall \eta\neq 0,
\end{equation}

that is, the maximum coset probability approaches 1/K (for a logical algebra of dimension K) almost surely. Then in the thermodynamic limit, for any $\gamma\in(1/K,1)$, for $p<p_{\mathrm{th}}$, we will almost surely never abort, and for $p>p_{\mathrm{th}}$, we will almost surely always abort. Therefore, under optimal post-selection, the abort threshold is independent of the post-selection parameter, and fixed at the code's non-post-selected logical threshold.

\section{S4. Details of numerical simulations}

If sampling at sufficiently high $p$ relative to the abort threshold, the probability of aborting tends towards $1$, forbidding simple rejection sampling methods. We use the Metropolis-Hastings algorithm for a Markov-Chain Monte Carlo (MCMC) method \cite{HastingsW.K.1970MCsm} to sample the conditional probability distribution $\mathbb{P}(E|m(S)>m)$, that is probabilities of errors sampled under an i.i.d. bit-flip channel of probability $p$, conditioned on the syndrome having magnetization above the post-selection criterion $m$. The Metropolis-Hastings algorithm can be used, as the i.i.d. error channels allow the detailed balance condition to be trivially satisfied, and local steps (i.e., flipping single spins) used by such a sampler retains ergodicity, as different logical sectors can be coupled without aborting. Decoding is performed using a minimum-weight perfect matching algorithm through PyMatching \cite{higgott2023sparse}.

\section{S5. Four phases in the thermodynamic limit}

Diagrammatically, we can plot in Fig. \ref{fig:regimes} what the four thermodynamic regimes correspond to in the thermodynamic limit with respect to conditional logical failure probabilities and abort probabilities. These probabilities should be considered such that the probability of failure almost surely approaches 0 or $\frac{K-1}{K}$ for a logical algebra of dimension $K$, and the probability of aborting approaches $0$ or $1$ almost surely. For example, point 1 corresponds to being below both thresholds, while point 3 corresponds to being above both thresholds. Points 2 and 3 correspond to being above one threshold, while below the other. Dashed lines indicate possible transitions as one increases post-selection. For example, suppose we begin above logical threshold without any post-selection. For $\gamma<0.5$, we are below abort threshold for all $p\in[0,1]$. In this case, we begin at point 4. As we increase post-selection, we move horizontally to the right in the phase diagram, until we possibly become below both thresholds, where we follow the dashed line to point 1. If we increase post-selection further, we may end up above the abort threshold, but below the logical threshold, arriving at point 2. Conversely, we may also begin at a sufficiently high physical error rate, where instead we make the transition from point 4 to point 3 and remain above both thresholds.

\begin{figure}
    \centering
\includegraphics[width=0.5\linewidth]{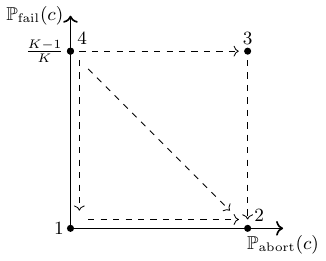}
    \caption{Probability of conditional failure and probability of aborting in the thermodynamic limit. Four regimes exist which depend on the physical error rates and the post-selection utilized.}
    \label{fig:regimes}
\end{figure}

\section{S6. Useful scalable post-selection}

Our conjecture on the usefulness of the shaded region in the phase diagram (i.e., located within Phase I) is based on the finite-size scaling behaviors observed in toric codes without post-selection. Specifically, the logical failure rate in standard toric codes follows the scaling ansatz:
\begin{equation}
    \mathbb{P}_{\mathrm{fail}}=f((p-p_{c})d^{\frac{1}{\nu}}),
\end{equation}
where $p_{c}$ is the logical threshold, $d$ is the code distance, and $\nu$ is the critical exponent governing the scaling near the threshold \cite{WatsonFernHE2014Lers}. Given a monotonic scaling function $f$, this scaling ansatz implies that increasing the code distance $d$ systematically reduces $\mathbb{P}_{\mathrm{fail}}$, provided $p<p_{c}$.

In the post-selected regime, we observe a conditional threshold $p_{c}^{\mathrm{cond}}$ that can be increased beyond the unconditioned threshold $p_{c}$ by tuning the post-selection parameter $\gamma$. This is evident in the boundary between Phases II and III where $p_{c}^{\mathrm{cond}}>p_{c}$. If the same scaling behavior holds for post-selected quantum error correction, the logical failure rate $\mathbb{P}_{\mathrm{fail}}$ can be decreased by either:
\begin{enumerate}
    \item Increasing the code distance $d$, which reduces logical errors via standard error correction mechanisms, or
    \item Adjusting the post-selection parameter $\gamma$ which increases $p_{c}^{\mathrm{cond}}$, and moves the system further away from the critical point for a given physical error rate $p$.
\end{enumerate}

In finite-size codes, these effects combine to provide a pathway for reducing logical failure rates in practical implementations. The shaded region represents the phase in which method 2 above is achieved, that is, $p_{c}^{\mathrm{cond}}>p_{c}$, decreasing the logical failure rate, while being below both logical and abort thresholds. All of Phase I is characterized by being below both abort and logical thresholds. However, the non-shaded region of Phase I has a conditional logical threshold equal to the unconditioned threshold, $p_{c}^{\mathrm{cond}}=p_{c}$. We claim the shaded region is the useful region for post-selection because of the increased threshold while remaining in the scalable Phase I. Future work should explore how the critical exponent $\nu$ varies with the post-selection parameter $\gamma$. For $\gamma=1$, where the fully post-selected threshold corresponds to the phase transition of the non-disordered Ising model, the critical exponent is expected to take the Ising model's value of $\nu=1$. An open question remains whether this variation in $\nu$ with $\gamma$ is continuous or discontinuous, which merits further investigation.

\section{S7. Effectiveness of heuristic post-selection rule}

The optimal post-selection criterion operates independently of the underlying noise processes. Specifically, as long as the error channels are Markovian--enabling the calculation of logical error coset probabilities--Theorem 1 remains valid. The heuristic post-selection rule, though less rigorous, similarly functions independently of the noise model under one key condition: the density of violated stabilizers must be anti-correlated with the logical success probability, conditioned on the observed syndrome. This condition is inherently satisfied by all i.i.d. error channels acting on surface codes.

More broadly, in scenarios involving independent but not necessarily identically distributed error channels, the MWPM decoder represents violated stabilizers as vertices in a graph, with edges weighted by error probabilities inferred from the noise model. As the number of violated stabilizers increases, the number of possible perfect matchings grows combinatorially. Consequently, the likelihood of a minimum-weight matching inducing a logical operator also increases, as the larger set of error configurations creates more opportunities for errors to align in a manner that spans a logical operator. This, in turn, diminishes the decoder’s ability to reliably identify the correct logical coset, reducing the conditional logical success probability.

These insights highlight the critical role of the anti-correlation between violated stabilizer density and logical success probability in ensuring the effectiveness of the heuristic post-selection rule. While the rule provides a flexible and effective approach to improving logical success probabilities across various error models, its applicability may be limited in the presence of highly correlated or adversarial noise channels. Addressing such challenges may require alternative strategies or refinements to the heuristic approach. Future work could focus on extending this framework to handle non-Markovian noise models or correlated errors, potentially expanding its scope and impact in practical quantum error correction implementations.

\end{document}